\title{\singlespace {Revisiting the state-space model of unawareness}}
\author[]{Alex A.T. Rathke\thanks{NECCT/FEA-RP/USP, University of S\~ao Paulo. \texttt{alex.rathke@alumni.usp.br}}}
\date{\today}
\theoremstyle{plain}
\newtheorem{theorem}{Theorem}
\newtheorem*{theorem*}{Theorem}
\newtheorem{proposition}{Proposition}
\newtheorem*{proposition*}{Proposition}
\newtheorem*{remark*}{Remark}
\newtheorem*{condition*}{Condition}
\newtheorem{definition}{Definition}
\newtheorem*{definition*}{Definition}
\newtheorem{assumption}{Assumption}
\newtheorem*{assumption*}{Assumption}
\begin{document}

\maketitle

\begin{abstract} 

We propose a knowledge operator based on the agent's possibility correspondence which preserves her non-trivial unawareness within the standard state-space model. Our approach may provide a solution to the classical impossibility result that 'an unaware agent must be aware of everything'.

\end{abstract}

\noindent\textbf{Keywords:} Knowledge, Unawareness, State-space model.
\\
\noindent\textbf{JEL Classification:} C70, C72, D80, D82.

\section{Introduction} \label{Introduction}

The standard state-space model of knowledge and unawareness originates from the seminal work of \cite{aumann1976} on common knowledge, where it assumes a set of states $\Omega$, and $P_{1}$ is a partition of $\Omega$, representing the information structure of the agent. When a state $\omega \in \Omega$ obtains, the agent perceives only the partitional subset $P_{1}(\omega) \subseteq P_{1}$ which contains the true state $\omega$. For any event $E \subseteq \Omega$, to say the agent 'knows' $E$ means that $E$ includes at least one partitional subset $P_{1}(\omega)$ \cite{aumann1976}. 

The partitional structure in \cite{aumann1976} implies two key conditions. First, for any states $\omega, \omega^{\prime} \in P_{1}(\omega)$, it implies $P_{1}(\omega) = P_{1}(\omega^{\prime})$, i.e. the states $\omega, \omega^{\prime}$ are not distinguishable by the agent \cite{geanakoplos1989,samet1990}. Second, each partitional subset $P_{1}(\omega) \subseteq P_{1}$ includes all the states $\omega$ which defines it, i.e. $\omega \in P_{1}(\omega)$ for all $\omega \in \Omega$. Therefore, the concept of knowledge of an event $E$ based on the partition $P_{1}$ derives a set-theoretic definition of knowledge equal to $\{ \omega \in \Omega : P_{1}(\omega) \subseteq E \}$ \cite{bacharach1985,chen2012,dekel1998,fukuda2021,galanis2013,geanakoplos1989,heifetz2006,li2009,schipper2014,tada2024}, for this becomes a standard definition in the literature.

Imperfect information is represented by coarser partitions of $\Omega$ \cite{aumann1976}, while further limitations of knowledge require more generalised structures. \cite{geanakoplos1989,samet1990} propose the application of possibility correspondences $P : \Omega \rightarrow 2^{\Omega}$ mapping states $\omega \in \Omega$ to subsets of $\Omega$, which allow for general cases as e.g. $\omega \not\in P(\omega)$ for some state $\omega \in \Omega$. In special, \cite{geanakoplos1989} proposes the application of non-partitional possibility correspondences to represent the concept of unawareness, which refers to several levels of lack of knowledge about an event \cite{bacharach1985,chen2012,dekel1998,fukuda2021,geanakoplos1989,heifetz2006,li2009,schipper2014,tada2024}. Non-partitional possibility correspondences imply $P(\omega) \cap P(\omega^{\prime}) \neq \emptyset$ for two different possibility sets $P(\omega) \neq P(\omega^{\prime})$. In this approach, the agent is unaware of some event $E \subseteq \Omega$ at some state $\omega \in \Omega$ at a necessary condition that two possibility sets not included in $E$ share the same state $\omega \in P(\omega) \cap P(\omega^{\prime})$.

The current state-space approach is convenient for its syntax-free structure and its plain implementation into economic decision models \cite{heifetz2006,li2009,galanis2013,geanakoplos1989}. Nonetheless, the seminal works of \cite{dekel1998,modica1994} demonstrate a major inconsistency within the model, that an unaware agent which satisfies some well-established properties of knowledge must indeed be aware of everything\footnote{The DLR impossibility result shows that an agent must be either fully aware or fully unaware of all possible states \cite{dekel1998}, see Section \ref{Standard model and the DLR impossibility result}.}. The classical result known as \emph{Dekel-Lipman-Rustichini (DLR) impossibility result} indicates that any standard state-space model based on possibility correspondences $P$ cannot appropriately represent unawareness \cite{dekel1998,modica1994,galanis2013,heifetz2006,li2009,schipper2014}.

Studies address the existing limitations by analysing which properties of knowledge are inconsistent across each other \cite{chen2012,fukuda2021,tada2024}, and by devising improved models which solve the DLR impossibility result, with highlights to the approach of multiple state-spaces \cite{fukuda2021,heifetz2006,li2009,galanis2013,schipper2014}. In this study, we revisit the standard model in Section \ref{Standard model and the DLR impossibility result}, then in Section \ref{One approach to preserve non-trivial unawareness} we propose one approach to preserve the non-trivial unawareness of agents within the standard state-space model based on possibility correspondences $P$.

\section{Standard model and the DLR impossibility result} \label{Standard model and the DLR impossibility result}

The standard model assumes a set of states $\Omega$, an algebra of events $2^{\Omega}$, and a possibility correspondence $P : \Omega \rightarrow 2^{\Omega}$ which maps states $\omega \in \Omega$ to subsets $P(\omega) \subseteq \Omega$. The set $P(\omega)$ is the set of states which the agent considers to be possible when state $\omega$ obtains \cite{dekel1998,geanakoplos1989}. The possibility correspondence $P$ represents the information processing capacity of the agent \cite{bacharach1985,dekel1998,geanakoplos1989,li2009,schipper2014}, and it applies to define the agent's knowledge with respect to each event $E \subseteq \Omega$. For all $E \subseteq \Omega$, the standard knowledge operator is defined by \cite{bacharach1985,chen2012,dekel1998,fukuda2021,galanis2013,geanakoplos1989,heifetz2006,li2009,schipper2014,tada2024}

\begin{equation} \label{k01}
K(E) := \{ \omega \in \Omega : P(\omega) \subseteq E \} = KE . \\
\end{equation}

Eq. \ref{k01} implies that the agent knows $E$ at the state $\omega$ if $P(\omega) \subseteq E$. $KE$ is the set of states in which the agent knows $E$, and $\neg KE = \Omega \setminus KE$ is the set of states in which she does not know $E$. Iterations of the knowledge operator $K$ model the agent's introspection, so if the agent knows $E$ at the state $\omega$, then she knows that she knows $E$, i.e. $KE \subseteq K(KE)$. 

On the other hand, a set of states $F \subseteq \Omega$ in which the agent is unaware of the event $E$ implies that at any state $\omega \in F$, the agent does not know $E$, and she does not know that she does not know $E$, and she does not know that she does not know that she does not know $E$, and so on. Iterations of the complement of the knowledge $\neg K$ as defined in Eq. \ref{k01} over $E$ define the unawareness operator satisfying \cite{chen2012,dekel1998,fukuda2021,galanis2013,geanakoplos1989,heifetz2006,li2009,schipper2014,tada2024} \newline

\begin{equation} \label{u01}
\begin{array}{rl}
U(E) &: \subseteq \neg KE \cap \neg K \neg KE \cap \neg K \neg K \neg KE \cap \dots \\
\\
&= \bigcap_{i = 1}^{\infty} (\neg K)^{i} (E) , \\
\end{array}
\end{equation}

\noindent where $(\neg K)^{i}$ indicates that the $\neg K$ operator iterates $i$ times. For any event $E \subseteq \Omega$, a non-empty set $U(E) = UE \neq \emptyset$ refers to the condition of \emph{non-trivial unawareness} \cite{chen2012,dekel1998,fukuda2021,galanis2013,tada2024}, for it indicates the states $\omega \in UE$ in which the agent is unaware of the event $E$.

The $K,U$ operators in Eq. \ref{k01} and \ref{u01} become the basis of the standard state-space model. Remark that if the possibility correspondence $P$ is a partition of the state-space $\Omega$, then the $U$ operator in Eq. \ref{u01} provides an empty set, so there is no unawareness at all. The seminal works of \cite{geanakoplos1989,samet1990} propose the application of non-partitional possibility correspondences $P$ to model non-trivial unawareness. In this approach, for some event $E$ and some unaware state $\omega \in UE$, there exists some state $\omega^{\prime} \in E$ such that $\omega^{\prime} \in P(\omega) \cap P(\omega^{\prime})$ \cite{dekel1998,geanakoplos1989}.

For example, let a state-space $\Omega = \{a,b,c \}$ and the non-partitional possibility correspondence $P(a) = \{a \}$, $P(b) = \{b \}$, $P(c) = \Omega$. Apply the $K,U$ operators in Eq. \ref{k01} and \ref{u01}. The agent knows the event $E = \{a \}$ at the state $K(\{a \}) = \{a \}$, while she is unaware of $E$ at states satisfying $U(\{a \}) \subseteq \{c \}$. In this example, non-trivial unawareness clearly implies $U(\{a \}) = \{c \}$.

Studies propose several properties of the standard $K,U$ operators, see \cite{bacharach1985,chen2012,dekel1998,fukuda2021,galanis2013,geanakoplos1989,heifetz2006,li2009,schipper2014,tada2024}. For all $E \subseteq \Omega$, regard the following: \newline

\noindent I. Necessitation: $K\Omega = \Omega$;

\noindent II. KU introspection: $K(UE) = \emptyset$;

\noindent III. AU introspection: $UE \subseteq U(UE)$. \newline

Necessitation means that the agent knows tautologies\footnote{Necessitation derives directly from the standard $K$ operator in Eq. \ref{k01}.}, KU introspection means that the agent does not know what she is unaware of, and AU introspection means that if the agent is unaware of some event, then she is unaware of being unaware of that event.

Despite the properties proposed in literature, the seminal works of \cite{dekel1998,modica1994} show that $K,U$ operators satisfying necessitation, KU introspection and AU introspection necessarily eliminate all non-trivial unawareness. The classical \emph{DLR impossibility result} demonstrates under conditions that for any non-trivial unawareness $UE \neq \emptyset$, it implies \cite{chen2012,fukuda2021,galanis2013,heifetz2006,li2009,schipper2014,tada2024}

\begin{equation} \label{dlr01}
\begin{array}{rl}
\emptyset \neq UE &\subseteq U(UE) \text{ (AU introspection)} \\
\\
&\subseteq \neg K \neg K (UE) \text{ (Eq. \ref{u01})} \\
\\
&= \neg K \Omega \text{ (KU introspection)} \\
\\
&= \emptyset , \text{ (necessitation)} \\
\\
\end{array}
\end{equation}

\noindent therefore the contradiction $\emptyset \neq UE = \emptyset$. Recently, \cite{chen2012,fukuda2021,tada2024} demonstrate that non-trivial unawareness resulting from the $U$ operator in Eq. \ref{u01} is inconsistent with AU introspection, for it explains the DLR impossibility result. E.g. our example above with $\Omega = \{a,b,c \}$, $P(a) = \{a \}$, $P(b) = \{b \}$, $P(c) = \Omega$. If the agent is unaware of the event $E = \{a \}$ so we have $U(\{a \}) = \{c \}$, then we find $U(U(\{a \})) = U(\{c \}) = \emptyset$, therefore it violates AU introspection, $U(\{a \}) \not\subset U(U(\{a \}))$\footnote{Non-trivial unawareness $UE \neq \emptyset$ deriving from Eq. \ref{u01} also violates the property of 'negative introspection' defined as $\neg KE \subseteq K \neg KE$ \cite{chen2012,fukuda2021,tada2024}, e.g. in the same example with $\Omega = \{a,b,c \}$, $P(a) = \{a \}$, $P(b) = \{b \}$, $P(c) = \Omega$, event $E = \{a \}$, we have $\neg K(\{a \}) = \{b,c \}$, $K \neg K(\{a \}) = \{b \}$, therefore $\neg K(\{a \}) \not\subset K \neg K(\{a \})$.}.

Overall, the standard state-space model based on the $K,U$ operators in Eq. \ref{k01} and \ref{u01} cannot incorporate the three well-established properties of necessitation, KU introspection and AU introspection without eliminating all non-trivial unawareness of agents \cite{dekel1998,modica1994,galanis2013,heifetz2006,li2009,schipper2014}.

\section{One approach to preserve non-trivial unawareness} \label{One approach to preserve non-trivial unawareness}

Regard the following Assumption:

\begin{assumption} \label{a01}
Unawareness of event $E \subseteq \Omega$ at the state $\omega \in \Omega$ implies $P(\omega) = \emptyset$.
\end{assumption}

Studies on multiple state-spaces successfully model non-trivial unawareness by specifying subspaces of the full state-space $\Omega$ in which the agent is aware, then requiring that the agent be able to reason about states that are contained in her awareness subspace only. In this approach, the agent is unaware of events at states that are not contained in her awareness subspace\footnote{E.g. Early study of \cite{geanakoplos1989} already refers to full possibility sets as smaller subsets of $\Omega$.}, therefore relaxing the property of necessitation. Relevant studies include the ones of \cite{heifetz2006} on informational lattice of state-spaces, \cite{li2009} on the product-space between the agent's subjective state-space and the factual state-space, and \cite{galanis2013} on multiple knowledge operators conditional on each subspace of $\Omega$. \cite{fukuda2021} provides a detailed analysis on awareness subspaces leading to unawareness of events.

Assumption \ref{a01} derives from the consistent results provided by the multiple state-spaces approach. Unawareness refers to a lack of conception of every state that is possible to obtain \cite{fukuda2021,galanis2013,geanakoplos1989,heifetz2006,li2009,schipper2014}, for if the agent cannot conceive that some state $\omega \in \Omega$ even exists, then she does not have the ability to make a correspondence between this unaware state $\omega$ and any other subset of $\Omega$. It means that the image set of any possibility correspondence $P(\omega)$ regarding an unaware state $\omega$ must be empty, i.e. unawareness at $\omega$ implies $P(\omega) = \emptyset$.

Intuitive nonetheless, Assumption \ref{a01} requires us to revise the standard $K,U$ operators in Eq. \ref{k01} and \ref{u01}, i.e. otherwise, under Assumption \ref{a01} the agent knows all unaware states, $P(\omega) = \emptyset \rightarrow \omega \in KE$ for all $E \subseteq \Omega$, see Eq. \ref{k01}.

The standard $K$ operator in Eq. \ref{k01} derives from the classical work of \cite{aumann1976}. Its original specification from \cite{aumann1976} is equal to $\{ \omega \in \Omega : P_{1}(\omega) \subseteq E \}$, where $P_{1}$ is a space partition satisfying the condition $\omega \in P_{1}(\omega)$ for all $\omega \in \Omega$. Now, the range of a partition $P_{1}$ does not contain the empty set by definition, $\emptyset \not\subset P_{1}$. We propose to regard the original definition of knowledge in \cite{aumann1976} as a reduced version, for which the full specification is equal to the right-hand-side of the equality

\begin{equation} \label{k02}
\begin{array}{rl}
\{ \omega \in \Omega : P_{1}(\omega) \subseteq E \} &= \{ \omega \in \Omega : P_{1}(\omega) \neq \emptyset , P_{1}(\omega) \subseteq E \} \\
\\
&= \{ \omega \in \Omega : \emptyset \neq P_{1}(\omega) \subseteq E \}, \quad P_{1}(\omega) \neq \emptyset . \\
\\
\end{array}
\end{equation}

Eq. \ref{k02} does not hold only for a generalised case with $P(\omega) = \emptyset$ for some state $\omega \in \Omega$. We propose the following Definition\footnote{We assume the notation $\{P(\omega) \neq \emptyset , P(\omega) \subseteq E \} = \{ \emptyset \neq P(\omega) \subseteq E \}$ is unambiguous to the reader.}:

\begin{definition} \label{d01}
For all $E \subseteq \Omega$, $K^{\prime} (E) := \{ \omega \in \Omega : \emptyset \neq P(\omega) \subseteq E \} = K^{\prime} E$. 
\end{definition}

The $K^{\prime}$ operator in Definition \ref{d01} derives from Eq. \ref{k02}, regarding the general possibility correspondence $P : \Omega \rightarrow 2^{\Omega}$. If the image set of $P$ does not include the empty set, then $K^{\prime} E$ is equal to standard $KE$ for all $E \subseteq \Omega$. The usual interpretation applies, for $K^{\prime} E$ is the set of states in which the agent knows $E$, while $\neg K^{\prime} E = \Omega \setminus K^{\prime} E$ is the set of states in which she does not know $E$. Iterations of the complement of the knowledge $\neg K^{\prime}$ as in Definition \ref{d01} derives the unawareness operator satisfying

\begin{equation} \label{u02}
\begin{array}{rl}
U^{\prime}(E) &: \subseteq \neg K^{\prime} E \cap \neg K^{\prime} \neg K^{\prime} E \cap \neg K^{\prime} \neg K^{\prime} \neg K ^{\prime}E \cap \dots \\
\\
&= \bigcap_{i = 1}^{\infty} (\neg K^{\prime})^{i} (E) . \\
\end{array}
\end{equation}

\noindent where $U^{\prime} (E) = U^{\prime} E$ is the set of states in which the agent is unaware of $E$. 

The $K^{\prime}, U^{\prime}$ operators are consistent with the representation of non-trivial unawareness in Assumption \ref{a01}.

\begin{theorem} \label{t01}
$U^{\prime} \Omega = \bigcap_{2^{\Omega}} U^{\prime} E = \bigcup_{\Omega} \{ \omega \in \Omega : P(\omega) = \emptyset \}$.
\end{theorem}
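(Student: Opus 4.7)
The plan is to set $N := \{\omega \in \Omega : P(\omega) = \emptyset\}$ and to establish the two inclusions $N \subseteq \bigcap_{E \in 2^{\Omega}} U^{\prime} E$ and $U^{\prime} \Omega \subseteq N$. Together with the trivial inclusion $\bigcap_{E \in 2^{\Omega}} U^{\prime} E \subseteq U^{\prime} \Omega$ (which holds because $\Omega \in 2^{\Omega}$ is one of the sets in the intersection), these will close the chain of equalities in the statement.

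For the first inclusion, I would begin with the key observation that Definition \ref{d01} builds the condition $P(\omega) \neq \emptyset$ directly into membership in $K^{\prime} E$. Hence any $\omega \in N$ fails to lie in $K^{\prime} E$ for every event $E \subseteq \Omega$, i.e. $N \subseteq \neg K^{\prime} E$ \emph{uniformly} in $E$. Iterating, $N \subseteq (\neg K^{\prime})^{i}(E)$ for every $i \geq 1$ and every $E$ (at each step one applies the uniform inclusion to the intermediate set $(\neg K^{\prime})^{i-1}(E)$), so by Eq. \ref{u02} we obtain $N \subseteq U^{\prime} E$ for every $E$; intersecting over $E \in 2^{\Omega}$ yields the first inclusion.

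For the reverse inclusion I would compute $\neg K^{\prime} \Omega$ directly from Definition \ref{d01}: $K^{\prime} \Omega = \{\omega \in \Omega : \emptyset \neq P(\omega) \subseteq \Omega\} = \{\omega \in \Omega : P(\omega) \neq \emptyset\}$, whence $\neg K^{\prime} \Omega = N$. Since $U^{\prime} \Omega \subseteq \neg K^{\prime} \Omega$ by Eq. \ref{u02}, we conclude $U^{\prime} \Omega \subseteq N$, which completes the argument.

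I do not anticipate a serious obstacle: the theorem is essentially a direct computation, and the only nontrivial ingredient is the uniform inclusion $N \subseteq \neg K^{\prime} E$, which follows at once from the new nonemptiness clause in Definition \ref{d01}. The one point that merits explicit mention is the iteration step in the first inclusion, but because the inclusion $N \subseteq \neg K^{\prime} E$ holds for \emph{every} $E$, no induction beyond a single line is needed.
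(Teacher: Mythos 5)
Your proposal is correct and follows essentially the same route as the paper: both arguments rest on the observation that the nonemptiness clause in Definition \ref{d01} forces $N := \{\omega : P(\omega) = \emptyset\} \subseteq \neg K^{\prime} E$ uniformly in $E$, so that $N$ survives every iterate in Eq. \ref{u02}. If anything, your write-up is slightly more complete than the paper's, since you explicitly close the chain with the reverse inclusion $U^{\prime} \Omega \subseteq \neg K^{\prime} \Omega = N$, a step the paper's proof leaves implicit.
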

\begin{proof}
Definition \ref{d01} and Eq. \ref{u02}. For any state $\omega \in \Omega$ and any possibility correspondence $P(\omega) = \emptyset$, Definition \ref{d01} implies $\omega \not\in K^{\prime} E$, $\omega \not\in K^{\prime} \neg K^{\prime} E$ for any event $E \subseteq \Omega$. Therefore, Eq. \ref{u02} implies $\omega \in U^{\prime} E$. Since $K^{\prime} E \subseteq K^{\prime} \Omega$ for all events $E \subseteq \Omega$, therefore Eq. \ref{u02} over $\Omega$ implies $\omega \in U^{\prime} \Omega \subseteq U^{\prime} E$ for all events $E \subseteq \Omega$, for all states $\omega$ such that $P(\omega) = \emptyset$.
\end{proof}

Theorem \ref{t01} implies the condition $U^{\prime} \Omega \subseteq U^{\prime} E$ for all events $E \subseteq \Omega$. The set $U^{\prime} \Omega$ is the intersection of all unawareness sets $U^{\prime} E$ with respect to all events $E \subseteq \Omega$, for it represents the non-trivial unawareness that is preserved across the agent's introspection processes. $U^{\prime} \Omega$ satisfies the properties of AU introspection equal to $U^{\prime} \Omega \subseteq U^{\prime} (U^{\prime} \Omega)$, and symmetry equal to $U^{\prime} \Omega = U^{\prime} \emptyset$.

\begin{theorem} \label{t02}
For all $E \subseteq \Omega$, $K^{\prime} E = KE \setminus U^{\prime} \Omega$.
\end{theorem}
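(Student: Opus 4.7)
The plan is to unpack all three set-theoretic definitions involved and verify the equality by mutual containment; essentially no nontrivial reasoning is required beyond bookkeeping, since the difference between $K$ and $K^{\prime}$ is exactly the exclusion of states with empty possibility set, and Theorem \ref{t01} identifies such states as precisely $U^{\prime}\Omega$.

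More concretely, first I would invoke Theorem \ref{t01} to rewrite $U^{\prime}\Omega$ as the set $\{\omega \in \Omega : P(\omega) = \emptyset\}$. Next I would split the proof into two inclusions. For $K^{\prime}E \subseteq KE \setminus U^{\prime}\Omega$, pick $\omega \in K^{\prime}E$; by Definition \ref{d01} this means $\emptyset \neq P(\omega) \subseteq E$, so $P(\omega) \subseteq E$ gives $\omega \in KE$ via Eq. \ref{k01}, and $P(\omega) \neq \emptyset$ together with Theorem \ref{t01} gives $\omega \notin U^{\prime}\Omega$. For the reverse inclusion $KE \setminus U^{\prime}\Omega \subseteq K^{\prime}E$, pick $\omega \in KE$ with $\omega \notin U^{\prime}\Omega$; then $P(\omega) \subseteq E$ from Eq. \ref{k01}, and $\omega \notin U^{\prime}\Omega$ combined with Theorem \ref{t01} yields $P(\omega) \neq \emptyset$, so $\emptyset \neq P(\omega) \subseteq E$, which places $\omega$ in $K^{\prime}E$ by Definition \ref{d01}.

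The only point to be careful about is that in Eq. \ref{k01} the vacuous inclusion $\emptyset \subseteq E$ allows states with $P(\omega) = \emptyset$ to belong to $KE$ for every $E$; this is the exact discrepancy between the standard operator $K$ and the revised operator $K^{\prime}$, and it is the reason why subtracting $U^{\prime}\Omega$ rather than some event-dependent set suffices. I do not anticipate any real obstacle: the argument is essentially a one-line observation that $\{P(\omega) \subseteq E\} = \{P(\omega) = \emptyset\} \cup \{\emptyset \neq P(\omega) \subseteq E\}$ as a disjoint partition of $KE$, from which the stated identity is immediate.
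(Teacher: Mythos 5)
Your proof is correct and follows essentially the same route as the paper's: both arguments reduce the identity to the observation that $K$ and $K^{\prime}$ differ exactly on the states with $P(\omega)=\emptyset$, which Theorem \ref{t01} identifies with $U^{\prime}\Omega$. Your version is simply more explicit in spelling out the two inclusions (and in flagging the vacuous inclusion $\emptyset \subseteq E$ as the source of the discrepancy), whereas the paper compresses the same reasoning into two one-directional observations.
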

\begin{proof}
Definition \ref{d01} and Theorem \ref{t01}. For all events $E \subseteq \Omega$ and any states $\omega, \omega^{\prime} \in E$, Definition \ref{d01} implies $\omega \in KE \rightarrow \omega \in K^{\prime} E$ for all non-empty possibility correspondences $P(\omega) \neq \emptyset$, and $\omega^{\prime} \not\in K^{\prime} E$ for all empty possibility correspondences $P(\omega^{\prime}) = \emptyset$. Theorem \ref{t01} implies $\omega^{\prime} \in U^{\prime} \Omega$ for all $P(\omega^{\prime}) = \emptyset$.
\end{proof}

Theorem \ref{t02} provides a revised version of the property of necessitation equal to $K^{\prime} \Omega = \Omega \setminus U^{\prime} \Omega$, so called 'R necessitation'. It implies the complement $\neg K^{\prime} \Omega = U^{\prime} \Omega$. Definition \ref{d01} and Theorems \ref{t01} and \ref{t02} derive straightforward properties of the $K^{\prime}, U^{\prime}$ operators as follows \cite{bacharach1985,chen2012,dekel1998,fukuda2021,galanis2013,geanakoplos1989,heifetz2006,li2009,schipper2014,tada2024}:

\begin{proposition} \label{p01}
For all $E \subseteq \Omega$, the $K^{\prime}, U^{\prime}$ operators and the unawareness set $U^{\prime} \Omega$ satisfy the properties: \newline

\noindent 1. R necessitation: $K^{\prime} \Omega = \Omega \setminus U^{\prime} \Omega$;

\noindent 2. Monotonicity: $E \subseteq F$ implies $K^{\prime} E \subseteq K^{\prime} F$;

\noindent 3. Truth: $K^{\prime} E \subseteq E$;

\noindent 4. Positive introspection: $K^{\prime} E \subseteq K^{\prime}(K^{\prime} E)$;

\noindent 5. Plausibility: $U^{\prime} \Omega \subseteq U^{\prime} E \subseteq \neg K^{\prime} E \cap \neg K^{\prime} \neg K^{\prime} E$;

\noindent 6. KU introspection: $K^{\prime} (U^{\prime} E) = \emptyset$;

\noindent 7. AU introspection: $U^{\prime} \Omega \subseteq U^{\prime}(U^{\prime} \Omega)$;

\noindent 8. Reverse AU introspection: $U^{\prime} (U^{\prime} E) \subseteq U^{\prime} E$;

\noindent 9. Symmetry: $U^{\prime} \Omega = U^{\prime} \emptyset$.

\end{proposition}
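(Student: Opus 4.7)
The nine properties split naturally into three groups according to the technique required. Properties 1, 2, 5, 7, and 9 follow directly from Definition \ref{d01}, Equation \ref{u02}, and Theorems \ref{t01}--\ref{t02}; properties 3 and 4 are inherited from the corresponding statements for the standard operator $K$ via the identity $K'E = KE \setminus U'\Omega$; and properties 6 and 8 combine truth, KU introspection, and the identification of $U'\Omega$ in Theorem \ref{t01}. The plan is to dispatch each group in turn.

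For the first group, R necessitation is Theorem \ref{t02} at $E = \Omega$ together with standard $K\Omega = \Omega$. Monotonicity reads off Definition \ref{d01}, since $\emptyset \neq P(\omega) \subseteq E \subseteq F$ gives $\omega \in K'F$. The first inclusion of plausibility is Theorem \ref{t01}, the second is the defining inclusion of $U'$ in Equation \ref{u02}. AU introspection for $U'\Omega$ is immediate, because Theorem \ref{t01} asserts $U'\Omega \subseteq U'E$ for every $E \in 2^\Omega$, and the choice $E = U'\Omega$ yields the property. For symmetry, $U'\Omega \subseteq U'\emptyset$ comes again from Theorem \ref{t01}; conversely, Definition \ref{d01} forces $K'\emptyset = \emptyset$, so any $\omega \in U'\emptyset$ satisfies $\omega \in \neg K'\neg K'\emptyset = \neg K'\Omega$, which equals $U'\Omega$ by R necessitation.

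Properties 3 and 4 inherit from the standard $K$ through Theorem \ref{t02}. Truth: since $K'E \subseteq KE$ and the standard $K$ satisfies $KE \subseteq E$ under the customary reflexivity of $P$ on its non-empty support, we conclude $K'E \subseteq E$. Positive introspection: for $\omega \in K'E$ we have $P(\omega) \neq \emptyset$ and $P(\omega) \subseteq E$; the usual transitivity argument for $K$ gives $P(\omega') \subseteq E$ for every $\omega' \in P(\omega)$, and reflexivity on the support then yields $P(\omega') \neq \emptyset$, so $P(\omega) \subseteq K'E$ and $\omega \in K'(K'E)$.

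The remaining two properties require more care. For KU introspection, suppose for contradiction $\omega \in K'(U'E)$. Truth puts $\omega \in U'E \subseteq \neg K'\neg K'E$, i.e.\ $\omega \notin K'\neg K'E$; but the inclusion $U'E \subseteq \neg K'E$ combined with monotonicity forces $\omega \in K'(\neg K'E)$, a contradiction. Reverse AU introspection: for $\omega \in U'(U'E)$, Equation \ref{u02} yields $\omega \in \neg K'\neg K'(U'E)$; KU introspection gives $K'(U'E) = \emptyset$, hence $\neg K'(U'E) = \Omega$ and $\omega \notin K'\Omega$, so R necessitation places $\omega \in U'\Omega$, and Theorem \ref{t01} closes the argument via $U'\Omega \subseteq U'E$. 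The main obstacle throughout is that Equation \ref{u02} is only an inclusion rather than an equality, so deductions about $U'E$ must be chased through containments in $(\neg K')^i(E)$; Theorem \ref{t01}'s identification of $U'\Omega$ with $\{\omega : P(\omega) = \emptyset\}$, together with R necessitation, is precisely the lever that lets the chain of inclusions close.
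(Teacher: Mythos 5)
Your arguments for properties 1, 2, 5, 6, 7, 8 and 9 are correct and rest on exactly the ingredients the paper points to (Definition \ref{d01}, Eq. \ref{u02}, Theorems \ref{t01} and \ref{t02}); the paper itself offers no proof beyond the remark that these properties derive ``straightforwardly,'' so for those seven items your write-up supplies more detail than the source. In particular, the derivation of KU introspection from Truth plus Monotonicity, of reverse AU introspection from KU introspection, R~necessitation and Theorem \ref{t01}, and of symmetry from $K^{\prime}\emptyset=\emptyset$ are all sound.

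The genuine gap is in property 4 (and, to a lesser extent, property 3). Truth and positive introspection require hypotheses on $P$ that the paper never states; you correctly flag that Truth needs reflexivity of $P$ on its non-empty support, but your argument for positive introspection contains a non sequitur. From $\omega^{\prime}\in P(\omega)$ and transitivity you get $P(\omega^{\prime})\subseteq P(\omega)\subseteq E$, but ``reflexivity on the support'' is the conditional statement that $\omega^{\prime}\in P(\omega^{\prime})$ \emph{whenever} $P(\omega^{\prime})\neq\emptyset$; it cannot deliver $P(\omega^{\prime})\neq\emptyset$. The missing step is not repairable without a further assumption: in the paper's own illustrative example ($\Omega=\{a,b,c,d\}$, $P(a)=\{a\}$, $P(b)=\{b\}$, $P(c)=\emptyset$, $P(d)=\Omega$, which is reflexive on its support and transitive) one has $d\in K^{\prime}\Omega=\{a,b,d\}$ but $K^{\prime}(K^{\prime}\Omega)=K^{\prime}(\{a,b,d\})=\{a,b\}$, so $K^{\prime}E\subseteq K^{\prime}(K^{\prime}E)$ fails at $E=\Omega$. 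Positive introspection holds only under the additional condition that no non-empty possibility set meets $U^{\prime}\Omega$, i.e. $P(\omega)\cap U^{\prime}\Omega=\emptyset$ whenever $P(\omega)\neq\emptyset$; you should either add that hypothesis explicitly (alongside reflexivity and transitivity) or note that property 4 is false as stated.
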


The $K^{\prime}$ operator in Definition \ref{d01} implies the condition $\neg K^{\prime} E \not\subset K^{\prime} \neg K^{\prime} E$, for it violates the property of negative introspection defined as $\neg KE \subseteq K \neg KE$ \cite{chen2012,dekel1998,fukuda2021,tada2024}.

Assume an example with a state-space $\Omega = \{a,b,c,d \}$ and the possibility correspondence $P(a) = \{a \}$, $P(b) = \{b \}$, $P(c) = \emptyset$, $P(d) = \Omega$. Apply the $K^{\prime},U^{\prime}$ operators in Definition \ref{d01} and Eq. \ref{u02}. The agent's non-trivial unawareness set is equal to $U^{\prime} \Omega = \{c \}$, see Theorem \ref{t01}. The agent knows the event $E = \{a \}$ at the state $K^{\prime} (\{a \}) = \{a \}$. By applying the $U^{\prime}$ operator in Eq. \ref{u02} over $E = \{a \}$, we find $U^{\prime}(\{ a \}) \subseteq \{c,d \}$, which is plausible with her unawareness set, implying $U^{\prime} \Omega = \{c \} \subseteq U^{\prime}(\{ a \}) \subseteq \{c,d \}$, see Proposition \ref{p01}. For the full event $F = \Omega$, the agent knows $F$ at the states $K^{\prime} F = K^{\prime} \Omega = \{ a,b,d \}$ by direct application of $K^{\prime}$ over $\Omega$, while she is unaware of $F$ at the state $U^{\prime} F = U^{\prime} \Omega = \{ c \}$. The non-trivial unawareness set $U^{\prime} \Omega = \{c \}$ indicates that at the state $c$, the agent does not know whether event $\{a,b,c,d \}$ or $\{a,b,d \}$ obtains. Indeed, this is the case for any event which includes some unaware state $\omega \in U^{\prime} \Omega$, for it implies a general condition $K^{\prime} (E \cup U^{\prime} \Omega) = K^{\prime} E$ for all $E \subseteq \Omega$, see Theorem \ref{t02}.

The property of R necessitation in Proposition \ref{p01} may provide a solution for the DLR impossibility result within the standard state-space model. Apply the $K^{\prime}, U^{\prime}$ operators over the derivation in \cite{dekel1998}, see Eq. \ref{dlr01}, and assume non-trivial unawareness $U^{\prime} \Omega \neq \emptyset$. We find

\begin{equation} \label{dlr02}
\begin{array}{rl}
\emptyset \neq U^{\prime} \Omega &\subseteq U^{\prime} (U^{\prime} \Omega) \text{ (AU introspection)} \\
\\
&\subseteq \neg K^{\prime} \neg K^{\prime} (U^{\prime} E) \text{ (Eq. \ref{u02}, Plausibility)} \\
\\
&= \neg K^{\prime} \Omega \text{ (KU introspection)} \\
\\
&= U^{\prime} \Omega , \text{ (R necessitation)} \\
\\
\end{array}
\end{equation}

\noindent therefore preserving the agent's non-trivial unawareness $U^{\prime} \Omega \neq \emptyset$.

\section{Discussion} \label{Discussion}

Assumption \ref{a01} regards empty possibility correspondences to represent non-trivial unawareness of agents, for the $K^{\prime}$ operator in Definition \ref{d01} adjusts the standard knowledge operator $K$ to be consistent with this representation. We may interpret that our approach in Section \ref{One approach to preserve non-trivial unawareness} derives two sets of unaware states, the core set $U^{\prime} \Omega$ in Theorem \ref{t01} referring to the non-trivial unawareness that is preserved through the agent's introspection processes, see Eq. \ref{dlr02}, and the more comprehensive set $U^{\prime} E$ defined in Eq. \ref{u02}. From plausibility $U^{\prime} \Omega \subseteq U^{\prime} E$ in Proposition \ref{p01}, Eq. \ref{u02}, the complement set $U^{\prime} E \setminus U^{\prime} \Omega \neq \emptyset$ refers to the non-trivial unawareness that might be resolved through the agent's own introspection, which derives from the non-partitional approach proposed by \cite{geanakoplos1989}. 

It implies two cases. On the one hand, studies show that the comprehensive set $U^{\prime} E$ satisfying $U^{\prime} E \setminus U^{\prime} \Omega \neq \emptyset$ violates the property of AU introspection, see \cite{chen2012,fukuda2021,tada2024}, therefore the DLR impossibility result as derived in Eq. \ref{dlr01} does not hold. On the other hand, the core set $U^{\prime} \Omega$ in Theorem \ref{t01} satisfies AU introspection, although this set $U^{\prime} \Omega \neq \emptyset$ does not produce a contradiction, see Eq. \ref{dlr02}. Hence, this approach preserves the agent's non-trivial unawareness either by the set $U^{\prime} E \neq \emptyset$ not holding to the derivation in Eq. \ref{dlr01} \cite{chen2012,fukuda2021,tada2024}, or by the set $U^{\prime} \Omega \neq \emptyset$ not leading to any contradiction in Eq. \ref{dlr02}. It may be an explanation and a possible solution for the classical DLR impossibility result.



\end{document}